\definecolor{linkblue}{named}{Blue}
\DeclareMathOperator{\sn}{sn}
\DeclareMathOperator{\qn}{qn}
\DeclareMathOperator{\tn}{tn}
\DeclareMathOperator{\tr}{tn}
\DeclareMathOperator{\pw}{pw}
\DeclareMathOperator{\tw}{tw}
\DeclareMathOperator{\lpw}{lpw}
\DeclareMathOperator{\ltw}{ltw}
\title{\MakeUppercase{Two Results on Layered Pathwidth and Linear Layouts}%
  \thanks{This research was partly funded by NSERC and the Ontario Ministry of Economic Development, Job Creation and Trade (formerly the Ministry of Research and Innovation)}}
\author{Vida Dujmović%
  \thanks{Deparment of Computer Science and Electrical Engineering, University of Ottawa}\qquad
  Pat Morin%
  \thanks{School of Computer Science, Carleton University}\qquad
  Céline Yelle\footnotemark[2]}
\date{}
\begin{document}
%
%
%

%
\maketitle              

\begin{abstract}
  Layered pathwidth is a new graph parameter studied by Bannister \etal\ (2015). In this paper we present two new results relating layered pathwidth to two types of linear layouts.  Our first result shows that, for any graph $G$, the stack number of $G$ is at most four times the layered pathwidth of $G$.   Our second result shows that any graph $G$ with track number at most three has layered pathwidth at most four.  The first result complements a result of Dujmović and Frati (2018) relating layered treewidth and stack number.  The second result solves an open problem posed by Bannister \etal\ (2015).
  %
\end{abstract}


\section{Introduction}
\pagenumbering{arabic}

The treewidth and pathwidth of a graph are important tools in
structural and algorithmic graph theory. Layered treewidth and layered
$H$-partitions are recently developed tools that generalize
treewidth. These tools played a critical role in recent breakthroughs on a
number of longstanding problems on planar graphs and their generalizations, including the queue number of planar graphs \cite{dujmovic.joret.ea:planar}, the nonrepetitive chromatic number of planar graphs \cite{dujmovic.esperet.ea:planar}, centered colourings of planar graphs \cite{debski.felsner.ea:improved}, and adjacency labelling schemes for planar graphs \cite{bonamy.gavoille.ea:shorter,dujmovic.esperet.ea:adjacency}.


Motivated
by the versatility and utility of layered treewidth, Bannister
\etal\ \cite{DBLP:conf/gd/BannisterDDEW16,bannister2018track}
introduced  layered pathwidth, which generalizes pathwidth in the same way
that layered treewidth generalizes treewidth.  The goal of this article is to fill the gaps in our knowledge about
the relationship between layered pathwidth and the following well studied
linear graph layouts: queue-layouts, stack-layouts and track
layouts.  We begin by defining all these terms.

\subsection{Layered Treewidth and Pathwidth}

A {\em tree decomposition} of a graph $G$ is given by a tree $T$ whose
nodes index a collection of sets $B_1,\ldots,B_p\subseteq V(G)$ called
\emph{bags} such that (1) for each $v\in V(G)$, the set $T[v]$ of bags that contain $v$ induces a
     non-empty (connected) subtree in $T$; and (2)~for each edge $vw\in E(G)$, there is some bag that contains both $v$ and $w$. If $T$ is a path, the resulting decomposition is a called a \emph{path decomposition}. The \emph{width} of a tree (path)
   decomposition is the size of its largest bag.  The \emph{treewidth}
   (\emph{pathwidth}) of $G$, denoted $\tw(G)$ ($\pw(G)$), is the minimum width of any tree (path) decomposition of $G$ minus $1$.

   A \emph{layering} of $G$ is a mapping $\ell:V(G)\to\Z$ with the
property that $vw\in E(G)$ implies $|\ell(v)-\ell(w)|\le 1$. One
can also think of a layering as a partition of $G$'s vertices into
sets indexed by integers, where $L_i=\{v\in V(G)
: \ell(v)=i\}$ is called a  \emph{layer}.  A \emph{layered tree (path) decomposition} of $G$ consists
of a layering $\ell$ and a tree (path) decomposition with bags $B_1,\ldots,B_p$ of $G$.
The \emph{(layered) width} of a layered tree (path) decomposition is the maximum
size of the intersection of a bag and a layer, i.e., $\max\{|L_i\cap
B_j|:i\in\Z,\, j\in\{1,\ldots,p\}\}$.  The \emph{layered treewidth (pathwidth)} of
$G$, denoted $\ltw(G)$ ($\lpw(G)$) is the smallest (layered) width of any layered
tree (path) decomposition of $G$.

\figref{grid} shows a layered path decomposition of a grid with diagonals, $G$, in which each pair of consecutive columns is contained in a common bag $B_i$ each row is contained in a single layer $L_j:=\{x\in V(G):\ell(x)=j\}$.  This decomposition shows that $G$ has layered pathwidth 2 since each bag $B_i$ contains two elements per layer $L_j$.
\begin{figure}
    \centering{\includegraphics{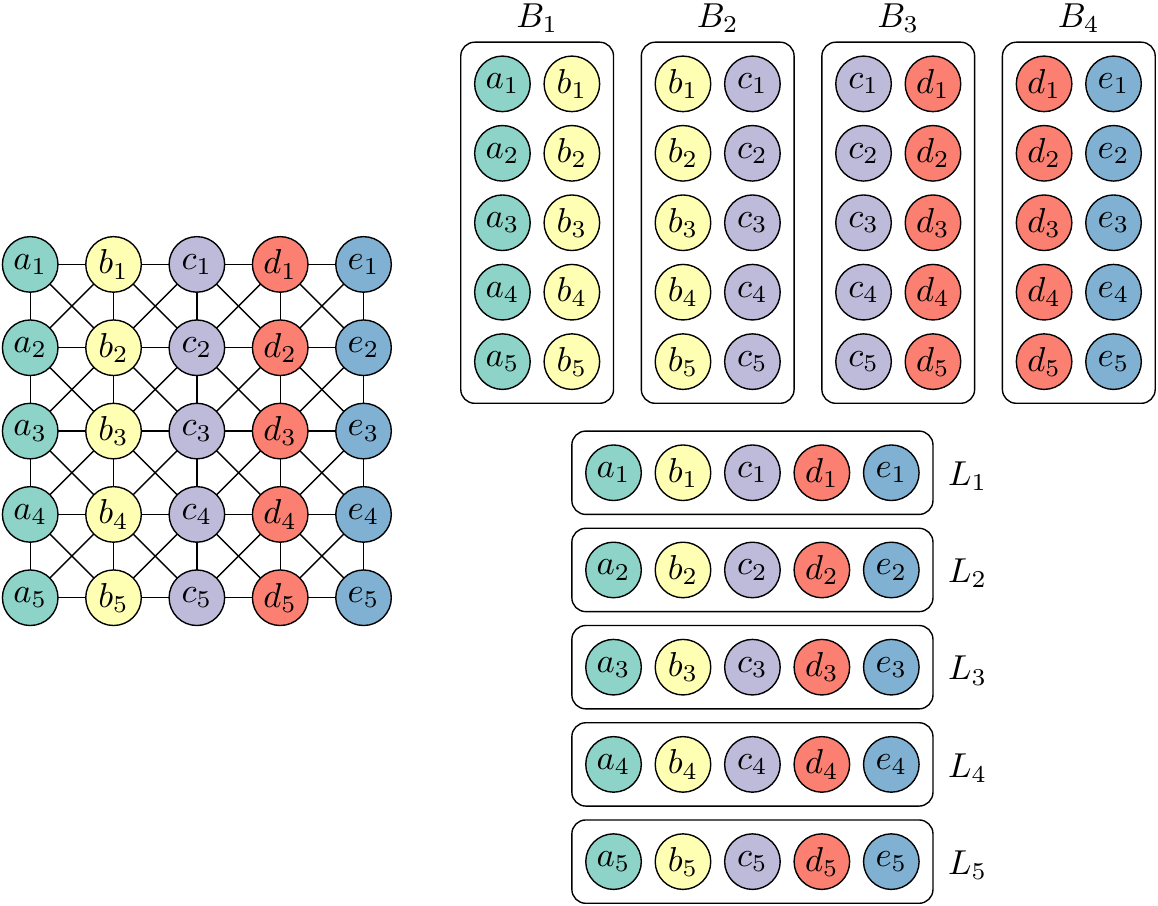}}
    \caption{A width-2 layered path decomposition of the $5\times 5$ grid graph $G$.}
    \figlabel{grid}
\end{figure}


Note that while layered pathwidth is at most pathwidth, pathwidth is
not bounded\footnote{We say that an integer-valued graph parameter $\mu$ is \emph{bounded} by some other integer-valued graph parameter $\nu$ if there exists a function $f:\N\to\N$ such that $\mu(G)\le f(\nu(G))$ for every graph $G$.} by layered pathwidth. There are graphs---for example the $n \times
n$ planar grid---that have unbounded pathwidth and bounded layered
pathwidth. Thus upper bounds proved in terms of layered pathwidth are
quantitatively stronger than those proved in terms of pathwidth. In addition, while having pathwidth at most $k$ is a minor-closed property,\footnote{A graph $H$ is a minor of a graph $G$ if a graph isomorphic to $H$ can be obtained from a
subgraph of $G$ by contracting edges. A class $G$ of graphs is minor-closed if for every graph $G\in\mathcal{G}$, every minor of $G$
is in $\mathcal{G}$. A minor-closed class is proper if it is not the class of all graphs.} having layered pathwidth at most $k$ is not.  For
example, the $2 \times n \times n$ grid graph has layered pathwidth at most $3$ but
it has  $K_n$ as a minor, and thus it has a minor of unbounded layered pathwidth. (Analogous statements hold for layered treewidth)



\subsection{Linear Layouts}

After introducing layered path decompositions, Bannister \etal\ \cite{DBLP:conf/gd/BannisterDDEW16,bannister2018track} set
out to understand the relationship between track/queue/stack number and
layered pathwidth.

A \emph{$t$-track layout} of a graph
$G$ is a partition of $V(G)$ into $t$ ordered independent sets $T_1,\ldots,T_t$ (with a total order $\prec_i$ for each $T_i$, $i\in\{1,\ldots,t\}$) with no X-crossings. Here an \emph{X-crossing} is a pair of edges $vw$ and $xy$ such that, for some $i,j\in\{1,\ldots,t\}$, $v,x\in T_i$ with $v\prec_i x$ and $w,y\in T_j$ with $y\prec_j w$. The minimum number of tracks in any $t$-track layout of $G$ is called the \emph{track number} of $G$ and is denoted as $\tr(G)$. A \emph{$t$-track graph} is a graph that has a $t$-track layout.

A \emph{stack (queue) layout} of a graph $G$ consists of a total order
$\sigma$ of $V(G)$ and a partition of $E(G)$ into  sets, called
\emph{stacks} (\emph{queues}), such that no two edges in the same
stack (queue) \emph{cross}; that is, there are no edges
$vw$ and $xy$ in a single stack with $v\prec_\sigma x\prec_\sigma w\prec_\sigma y$
(\emph{nest}; there are no edges $vw$ and $xy$ in a single queue with $v\prec_\sigma
x\prec_\sigma y\prec_\sigma w$.).  The minimum number of stacks (queues) in a
stack (queue) layout of $G$ is the \emph{stack number} (the
\emph{queue number}) of $G$ and is denoted as $\sn(G)$ ($\qn(G)$). A stack layout is also called a {\em book embedding} and stack number is also called {\em book thickness} and {\em page number}. An \emph{$s$-stack graph} (\emph{$q$-queue graph}) is a graph that has a stack (queue) layout with at most $s$ stacks ($q$ queues).

\subsection{Summary of (Old and New) Results}
A summary of known and new results on these rich relationships between layered pathwidth, queue number, stack number, and track number are outlined in \tabref{summary}.  The first two rows show (the older results) that track number and queue number are \emph{tied}; each is bounded by some function of the other \cite{dmw05,dpw04}.

\begin{table}
  \begin{center}
    \begin{tabular}{|l@{\hspace{1em}}l|}\hline
      \multicolumn{2}{|l|}{Queue-Number versus Track-Number} \\ \hline
      $\qn(G) \le \tn(G)-1$ & \cite[Theorem~2.6]{dmw05} \\
      $\tn(G) \le 2^{O(\qn(G)^2)}$ & \cite[Theorem~8]{dpw04} \\ \hline
      \multicolumn{2}{l}{} \\
      %
      %
      \hline
      \multicolumn{2}{|l|}{Queue-Number versus Layered Pathwidth} \\ \hline
      $\qn(G) \le 3\lpw(G)-1$ & \cite[Theorem~2.6]{dmw05}\cite[Lemma~9]{bannister2018track} \\
      $\qn(G) = 1 \Rightarrow \lpw(G)\le 2$ &
      \cite[Theorem~3.2]{HR-SJC92}\cite[Corollary~7]{bannister2018track} \\
      $\exists G : \qn(G)=2,\, \lpw(G)=\Omega((n/\log n)$
      & \cite[Theorem~1.4]{dsw16} \\ \hline
      \multicolumn{2}{l}{} \\
      \hline
      \multicolumn{2}{|l|}{Stack-Number versus Layered Pathwidth} \\ \hline
      $\sn(G) \le 1 \Rightarrow \lpw(G) \le 2$
      & \cite[Corollary~16]{bannister2018track} \\
      $\exists G: \sn(G)=2,\, \lpw(G) = \Omega(\log n)$
      & ($G$ is a binary tree plus an apex vertex)\\
      $\exists G: \sn(G)=3,\, \lpw(G) = \Omega(n/\log n)$
      & \cite[Theorem~1.5]{dsw16} \\
      $\sn(G) \le 4\lpw(G)$ & \textbf{\thmref{stacknumber}} \\ \hline
      \multicolumn{2}{l}{} \\
      \hline
      \multicolumn{2}{|l|}{Track-Number versus Layered Pathwidth} \\ \hline
      $\tn(G) \le 3\lpw(G)$ & \cite[Lemma~9]{bannister2018track} \\
      $\tn(G) = 1 \Rightarrow \lpw(G)= 1$ & ($G$ has no edges) \\
      $\tn(G) \le 2\Rightarrow \lpw(G) \le 2$ & ($G$ is a forest of caterpillars) \\
      $\tn(G) \le 3\Rightarrow \lpw(G) \le 4$ & \textbf{\thmref{main}} \\
      $\exists G : \tn(G)=4,\, \lpw(G)=\Omega((n/\log n)$
      & \cite[Theorem~1.5]{dsw16} \\ \hline
    \end{tabular}
  \end{center}
  \caption{Relationships between track number, queue number, stack number, and layered pathwidth.}
  \tablabel{summary}
\end{table}

The next group of rows relates queue number and layered pathwidth. Queue number is bounded by layered pathwidth \cite{dmw05}.  Graphs with queue number 1 are arched-levelled planar graphs\footnote{An \emph{arched leveled} planar embedding is one in which the vertices are placed on parallel lines (levels) and each edge either connects vertices on two consecutive levels or forms an arch that connects two vertices on the same level by looping around all previous levels.  A graph is arched levelled if it has an arched levelled planar embedding.} and have layered pathwidth at most 2.\footnote{Theorem~6 in \cite{bannister2018track} can easily be modified to
prove that arched levelled planar graphs have layered
pathwidth at most 2. That is achieved by adding the leftmost vertex of each level to each bag of the path decomposition.}  However, there are graphs with queue number 2 that are expanders; these graphs have pathwidth $\Omega(n)$ and diameter $O(\log n)$, so their layered pathwidth is $\Omega(n/\log n)$ \cite{dsw16}.  Thus, layered pathwidth is not bounded by queue number.

The next group of rows examines the relationship between stack number and layered pathwidth.  Graphs of stack number at most 1 are exactly the outerplanar graphs, which have layered-pathwidth at most 2 \cite{bannister2018track}.  On the other hand, there are graphs of stack number 2 that have unbounded layered pathwidth \cite{dsw16}.  Thus, in general, layered pathwidth is not bounded by stack number.  Our first result, \thmref{stacknumber}, shows that stack number is nevertheless bounded by layered pathwidth.

\begin{thm}\thmlabel{stacknumber}
 For every graph $G$, $\sn(G)\le 4 \lpw(G)$.
\end{thm}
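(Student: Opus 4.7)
The plan is to build a $4k$-page book embedding of $G$ directly from a layered path decomposition of width $k:=\lpw(G)$. Fix such a decomposition with layering $\ell:V(G)\to\Z$ and bags $B_1,\ldots,B_p$ satisfying $|B_j\cap L_i|\le k$ for all $i,j$. For every vertex $v$ the set $\{j:v\in B_j\}$ is an interval $[a(v),b(v)]$, and inside each layer $L_i$ the family $\{[a(v),b(v)]:v\in L_i\}$ is an interval graph of clique size at most $k$. A proper interval colouring therefore assigns each vertex a rank $r(v)\in\{1,\ldots,k\}$ such that two vertices of the same layer with the same rank have disjoint intervals.

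For the vertex order $\sigma$, I would sort lexicographically by $(\ell(v),a(v),r(v))$: the vertices of each layer occupy a contiguous block of $\sigma$, and within a block they appear in bag-entry order with rank as a tiebreaker. The key structural consequence of this choice is that if $v$ and $x$ share a layer and a rank and $v\prec_\sigma x$, then $b(v)<a(x)$; and for any neighbour $w$ of $v$ inside that layer, the interval-overlap condition forces $a(w)\le b(v)<a(x)$, whence $w\prec_\sigma x$ as well.

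For the edge partition, I would split $E(G)$ into two groups of at most $2k$ pages each, for $4k$ pages in all. The first group handles \emph{same-layer} edges, oriented so that $v\prec_\sigma w$, with a label of the form $(A,r(v),s)$ where $s\in\{0,1\}$ is an auxiliary bit; the second group handles \emph{cross-layer} edges with a label $(B,r(v),s)$, where $v$ is the endpoint in the lower layer. Same-layer pages can be reused across different layers because vertices of different layers occupy disjoint $\sigma$-intervals, and a parity-of-$\ell$ trick lets cross-layer pages be reused across non-adjacent pairs of consecutive layers for the same reason.

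The main obstacle is choosing the auxiliary bit $s$ so that two edges sharing the same page label never cross in $\sigma$. The rank coordinate $r(v)$ forces the two designated endpoints of any two same-page edges onto a common track, hence---by the disjoint-intervals property---into a definite $\sigma$-order that extends to their same-layer neighbours by the argument in the previous paragraph. The remaining crossings arise from the \emph{other} endpoints of the two edges, namely the larger-rank endpoint (for same-layer edges) or the upper-layer endpoint (for cross-layer edges), whose ranks are unrelated and whose intervals may interleave. The bit $s$ is to be chosen from information intrinsic to a single edge (for instance, the parity of the other endpoint's rank, or the side of $v$'s interval on which the other endpoint's interval sits) so that a common value of $s$ rules out each of the remaining crossing patterns. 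Verifying this in the four type/bit combinations is the technical heart of the proof, and those four combinations are precisely where the factor $4$ in the bound comes from.
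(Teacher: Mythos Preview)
Your high-level architecture---rank-colour the vertices inside each layer, order the vertices layer by layer, and label each edge by the rank of a designated endpoint together with one extra bit---matches the paper's. But the specific vertex order you propose is fatal, and no choice of the bit $s$ can rescue it.

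You order every layer in the \emph{same} direction (increasing $a(v)$). Consider layered pathwidth $k=1$ with bags $B_1,\ldots,B_{2m}$, layer $L_0=\{v_1,\ldots,v_m\}$ where $v_j$ lives only in $B_{2j-1}$, layer $L_1=\{w_1,\ldots,w_m\}$ where $w_j$ lives in $B_{2j-1}$ and $B_{2j}$, and edge set $\{v_jw_j:1\le j\le m\}$. Every bag meets every layer in at most one vertex, so all ranks equal $1$. Your order is
\[
\sigma:\quad v_1,\ldots,v_m,\ w_1,\ldots,w_m,
\]
and for $j<j'$ we get $v_j\prec_\sigma v_{j'}\prec_\sigma w_j\prec_\sigma w_{j'}$, so the $m$ edges form a pairwise-crossing rainbow. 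All are cross-layer edges with the same lower-endpoint rank, hence all receive labels $(B,1,s)$; with only two values of $s$ you cannot avoid two same-label crossing edges once $m\ge 3$. Neither of your suggested choices for $s$ (parity of the other endpoint's rank, or which side of $v$'s interval $w$ sits on) helps here, since every $r(w_j)=1$ and every $a(w_j)=a(v_j)$.

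The paper's missing ingredient is to \emph{alternate} the within-layer direction: increasing $\prec_B$ on even layers, decreasing on odd layers. In the example above the order becomes $v_1,\ldots,v_m,w_m,\ldots,w_1$, turning the rainbow into a nested family with no crossings at all. With this alternation, the designated endpoint of an edge is its $\prec_B$-earlier vertex $v$ (not its $\sigma$-earlier or lower-layer vertex), and the two bits are the parity of $\ell(v)$ and whether the edge goes to layer $\ell(v)+1$ or not. Your same-layer/cross-layer dichotomy is subsumed by this slope bit; the real work the second bit does is to separate edges that ascend from $L_i$ to $L_{i+1}$ from edges that descend from $L_{i+2}$ to $L_{i+1}$, which can otherwise collide in the middle layer. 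Once you adopt the alternating order and these two bits, the crossing analysis reduces to showing that two crossing edges with the same bits have $\prec_B$-earlier endpoints sharing a bag, hence different ranks.
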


The final group of rows relates track number and layered pathwidth.  Track number is bounded by layered pathwidth \cite{bannister2018track}.  Layered pathwidth is bounded by track number when the track number is 1, or 2, but is not bounded by track number when the track number is 4 or more \cite{bannister2018track}.  The question of what happens for track number 3 is stated as an open problem by Bannister \etal\ \cite{bannister2018track}, who solved the special case when $G$ is bipartite and has track number 3.  Our \thmref{main} solves this problem completely by showing that graphs with track number at most 3 have layered pathwidth at most 4.

Note that minor-closed classes that have bounded layered pathwidth have been characterized (as classes of graphs that exclude an apex tree\footnote{A graph $G$ is an {\em apex tree} if it has a vertex $v$ such that $G-v$ is a forest. } as a minor) \cite{DBLP:journals/corr/abs-1810-08314}. However, this result could not have been used to prove \thmref{main} since the family of 3-track graphs is not closed under taking minors.\footnote{To see this, start with an $ n\, \times\, n$ planar grid. Every planar grid has a 3-track layout. However, for large enough $n$, one can contract/delete edges on this grid graph such that the result is a series-parallel graph that does not have a 3-track layout (in particular the series-parallel graph from Theorem 18 in \cite{bannister2018track}).}

\begin{thm}\thmlabel{main}
 Every graph $G$ that has $\tr(G)\le 3$, has $\lpw(G)\le 4$.
\end{thm}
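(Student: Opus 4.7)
The plan is to construct, from a $3$-track layout of $G$ with tracks $T_1,T_2,T_3$ ordered by $\prec_1,\prec_2,\prec_3$, an explicit layered path decomposition of width at most $4$.

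First I would fix a two-layer layering by setting $\ell(v)=0$ when $v \in T_2$ and $\ell(v)=1$ when $v \in T_1 \cup T_3$. Because each $T_i$ is an independent set, every edge of $G$ either joins $T_2$ with $T_1 \cup T_3$ (so the layers differ by exactly $1$) or joins $T_1$ with $T_3$ (both on layer $1$); hence $\ell$ is a valid layering. With only two layers, achieving layered width $4$ amounts to producing a path decomposition in which every bag contains at most $4$ vertices of $T_2$ and at most $4$ vertices of $T_1 \cup T_3$.

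Next I would choose a master linear order $\sigma$ of $V(G)$ that simultaneously refines $\prec_1$, $\prec_2$, and $\prec_3$, with the additional requirement that the endpoints of every $T_1$-$T_3$ edge appear close together in $\sigma$. The bags $B_1,\ldots,B_n$ would then be the standard sweep bags for $\sigma$: $B_j$ consists of the $j$-th vertex of $\sigma$ together with every earlier-scanned vertex that still has an unscanned neighbour (its ``active'' part). This is automatically a path decomposition, so the only remaining task is to bound its layered width.

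For the width bound I would exploit the no-X-crossing conditions between pairs of tracks. The no-X-crossing condition between $T_2$ and $T_1$ forces the $T_1$-neighbourhoods of consecutive $T_2$-vertices (in $\prec_2$) to be intervals in $\prec_1$ that meet in at most one vertex, and the same holds between $T_2$ and $T_3$. Since $\sigma$ refines $\prec_1,\prec_2,\prec_3$, this structural property should show that at any scan time only a constant number of scanned $T_2$-vertices can still have an unscanned $T_1$- or $T_3$-neighbour, bounding the active vertices on layer $0$. An analogous monotonicity argument, now using the no-X-crossing condition between $T_1$ and $T_3$, would bound the active vertices on layer $1$.

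The main obstacle is controlling the $T_1$-$T_3$ edges, which are same-layer edges and can nest around many $T_2$-vertices; a naive interleaving of the three orders may leave arbitrarily many $T_1$- or $T_3$-vertices simultaneously active on layer $1$. The crux of the proof is therefore the careful design of $\sigma$ so that the non-crossing structure of $T_1$-$T_3$ edges translates into only a bounded number of ``open fronts'' on layer $1$ at any moment, which is what ultimately yields the constant bound of $4$ per layer.
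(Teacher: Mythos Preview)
Your two-layer layering cannot work, and the obstruction is structural rather than a matter of choosing $\sigma$ cleverly. If a layering of $G$ uses only two nonempty layers, then a layered path decomposition of width $w$ has bags of size at most $2w$, so $\pw(G)\le 2w-1$. Your plan would therefore force every $3$-track graph to have pathwidth at most $7$. But $3$-track graphs have unbounded pathwidth: the $n\times n$ planar grid is levelled planar (take the diagonal levels $\{(i,j):i+j=k\}$), and wrapping those levels modulo $3$ gives a valid $3$-track layout, yet the grid has pathwidth $n$. So the ``main obstacle'' you flag---the $T_1$--$T_3$ edges sitting on a single layer---is not merely hard to control; with only two layers it is impossible. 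For the grid in the track layout just described, every sweep order $\sigma$ leaves $\Omega(n)$ vertices simultaneously active on layer $1$.

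The paper's proof goes in a completely different direction and uses an \emph{unbounded} number of layers. After reducing (by induction) to the case with no separating triangle having one vertex per track, it greedily builds a spiralling path $P=v_1,\ldots,v_r$ that cycles through $T_1,T_2,T_3,T_1,\ldots$, each time jumping to the furthest available neighbour on the next track. Setting $\ell(v_i)=i$ and assigning every non-$P$ vertex the layer of its nearest $P$-successor on its own track yields a $2$-weak layering (every edge spans at most two layers). Cutting the prism-like embedding of $G$ along $P$ shows that $G-P$ is levelled planar with respect to these layers, so $G-P$ has a width-$1$ layered path decomposition; since $P$ contributes exactly one vertex per layer, adding all of $P$ to every bag raises the $2$-weak layered width to $2$, and collapsing consecutive pairs of layers then gives $\lpw(G)\le 4$.
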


\section{Proof of \thmref{stacknumber}}

Let $G$ be any graph, let $B=B_1,\ldots,B_p$ be a path decomposition of $G$, and $\ell:V(G)\to\Z$ be a layering that obtains so that $B$ has layered width $\lpw(G)$ with respect to the layering $\ell$.

We may assume that $B$ is \emph{left-normal} in the sense that, for every distinct pair $v,w\in V(G)$, $\min\{i\in\Z : v\in B_i\} \neq \min\{i\in\Z:w\in B_i\}$. It is straightforward to verify that any path decomposition can be made left-normal without increasing the layered width of the decomposition.
We use the notation $v\prec_B w$ if $\min\{i\in\Z : v\in B_i\} < \min\{i\in\Z:w\in B_i\}$.  Since $B$ is left-normal, $\prec_B$ is a total order.

For any edge $vw$ with $v\prec_B w$ we call $v$ the \emph{left endpoint} of the edge and $w$ the \emph{right endpoint}.  We use the convention of writing any edge with endpoints $v$ and $w$ as $vw$ where $v$ is the left endpoint and $w$ is the right endpoint.  With these definitions and conventions in hand, we are ready to proceed.

\begin{proof}[Proof of \thmref{stacknumber}]
  This proof is illustrated in \figref{stack-layout}.
  Consider a left-normal path decomposition $B=B_1,\ldots,B_p$ of $G$ and a layering $\ell:V(G)\to\Z$ such that $B$ has layered width at most $k$ with respect to $\ell$.  Our goal is to construct a stack layout of $G$ using $4k$ stacks.
  \begin{figure}
      \centering{\includegraphics[width=.98\textwidth]{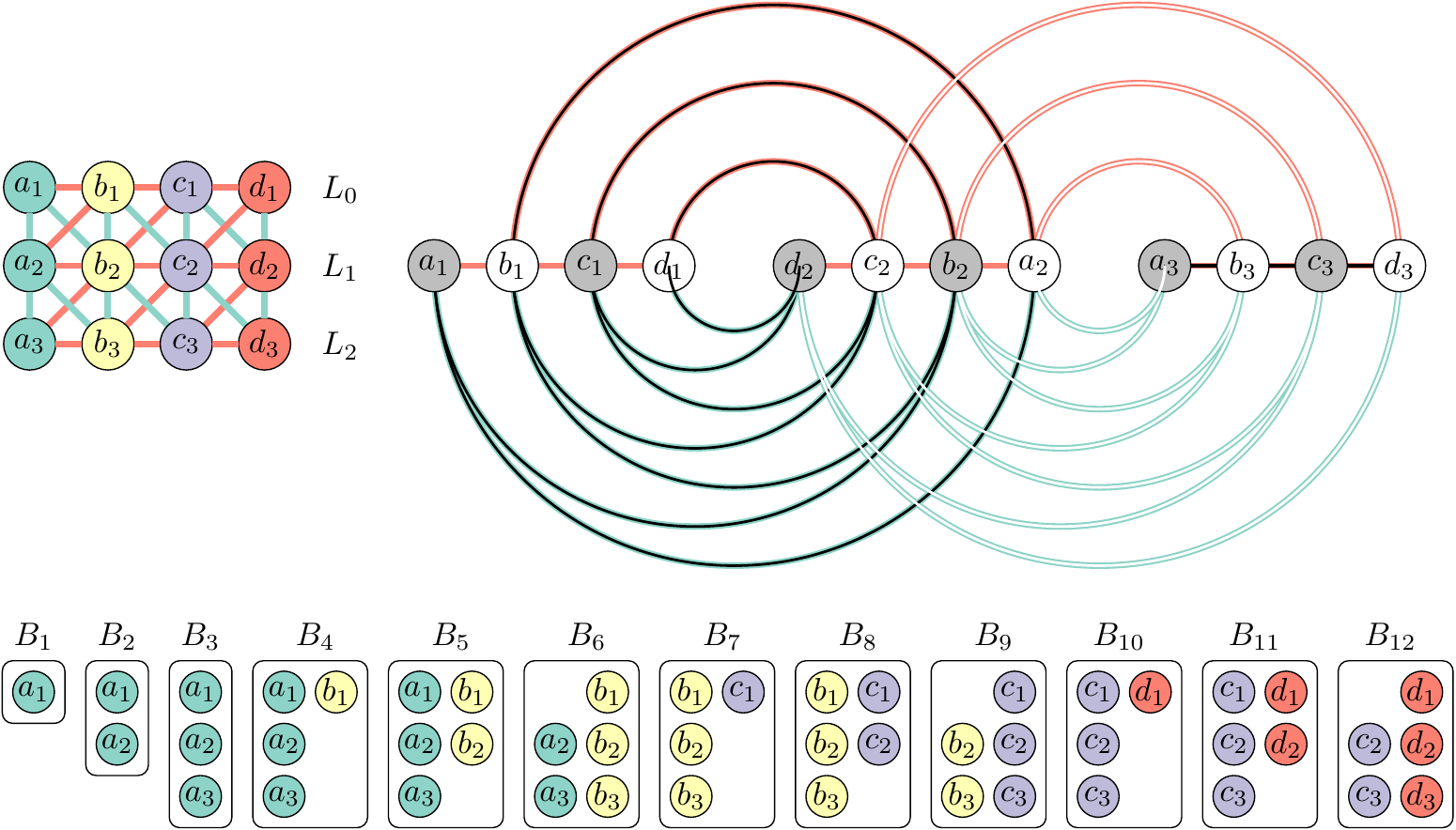}}
      \caption{Colourings used in the proof of \thmref{stacknumber}.}
      \figlabel{stack-layout}
  \end{figure}

  We first construct a total ordering $\prec_\sigma$ on the vertices of $G$, as follows:
  \begin{compactenum}[(Property 1)]
    \item If $v\in L_i$ and $w\in L_j$ with $i < j$, then $v\prec_\sigma w$.
    \item If $v,w\in L_i$ with $v \prec_B w$ then
    \begin{compactenum}[(a)]
      \item $v\prec_\sigma w$ if $i$ is even; or
      \item $w\prec_\sigma v$ if $i$ is odd.
    \end{compactenum}
  \end{compactenum}

  Next we define a colouring $\varphi:E(G)\to\{0,1\}\times\{0,1\}\times\{1,\ldots,k\}$ that determines a partition of the $E(G)$ into stacks.  We begin with a (greedy) vertex $k$-colouring $\varphi:V\to \lbrace 1,\ldots,k\rbrace$ so that, for any $i,j\in\N$, no two vertices in $B_i\cap L_j$ are assigned the same colour. This is easily done since, for each $j\in\Z$, the path decomposition $\langle B_i\cap L_j : i\in \Z\rangle$ has bags of size at most $k$.

  We say that an edge $vw$ has \emph{positive slope} if $\ell(v)=\ell(w)+1$ and has \emph{non-positive slope} otherwise.  We colour the edge $vw$ with the colour $\varphi(vw)=(a,b,c)$ where $a=\ell(v)\bmod 2$, $b\in\{0,1\}$ is a bit indicating whether $vw$ has positive ($b=1$) or non-positive ($b=0$) slope, and $c$ is the colour $\varphi(v)$ of the left endpoint $v$.  This clearly uses only $2\times2\times k=4k$ colours so all that remains is to show that
  $\sigma$ and the partition $P=\{\{vw\in E(G):\varphi(vw)=(a,b,c)\}:(a,b,c)\in \{0,1\}\times\{0,1\}\times\{1,\ldots,k\}\}$  is indeed a stack layout.

  Consider any two distinct edges $vw,xy\in E(G)$ (whose left endpoints are $v$ and $x$, respectively).  First observe that, if $\ell(v)\equiv \ell(x)\pmod 2$ then either $\ell(v)=\ell(x)$ or $\ell(v)-\ell(x)\ge 2$. In the latter case, the only way in which $vw$ and $xy$ can cross with respect to $\prec_\sigma$ is if $\ell(v)+b = \ell(y)=\ell(w) = \ell(x)-b$ for some $b\in\{-1,1\}$.  However, in this case, $vw$ has positive slope and $xy$ has non-positive slope, or vice-versa, so $\varphi(vw)$ and $\varphi(xy)$ differ in their second component.  For example, we could have $\ell(x)=2$, $\ell(v)=0$, and $b=1$, in which case $\ell(y)=\ell(w)=1$, in which case $xy$ has positive slope and $vw$ has non-positive (in face, negative) slope.

  Therefore, we only need to consider pairs of edges $xy$ and $vw$ where $\ell(v)=\ell(x)=i$. We assume, without loss of generality that $i$ is even
  and that $v\prec_\sigma x$.  With these assumptions, there are only three cases in which $vw$ and $xy$ can cross:
  \begin{enumerate}
    \item $v\prec_\sigma x\prec_\sigma w\prec_\sigma y$.

    Since $\ell(v)=\ell(x)=i$ is even and $v\prec_\sigma x$, we have $v\prec_B x$  (by Property~2a) and $\ell(w)\ge i$ (by Property~1).  If $\ell(w)=i$, then $v\prec_B x\prec_B w$ (by Property~2a), so $v$ and $x$ both appear in some bag $B_j$ and $\varphi(v)\neq\varphi(x)$, so $\varphi(vw)$ and $\varphi(xy)$ differ in their third component.  If $\ell(w)=i+1$, then $w\prec_\sigma y$ implies that $\ell(y)\ge\ell(w)$ (by Property~1), which implies $\ell(y)=\ell(w)=i+1$, so $y\prec_B w$ (by Property~2b).  We now have $v\prec_B x\prec_B y\prec_B w$ so $v$ and $x$ appear in a common bag $B_j$ and $\varphi(vw)$ and $\varphi(xy)$ differ in their third component.

    \item $v\prec_\sigma y\prec_\sigma w\prec_\sigma x$.  Since $v\prec_\sigma y$, $\ell(y)\ge \ell(v)=i$ (by Property~1).  Similarly, since $y\prec_\sigma x$, $\ell(y)\le\ell(x)=i$ (by Property~1).  Therefore, $\ell(y)=i$, so $y\prec_B x$ (by Property~2a).  This is not possible since, by definition, $x$ is the left endpoint of $xy$.

    \item $y\prec_\sigma v\prec_\sigma x\prec_\sigma w$.  Since $y\prec_\sigma x$, $ell(y)\le \ell(x)=i$ (by Property~1).  If $\ell(y)=i$ then, since $i$ is even, $y\prec_B x$ (by Property~2a) so it must be the case that $\ell(y)=i-1$. Since $x$ is the left endpoint of $xy$, $xy$ therefore has positive slope.

    Since $v\prec_\sigma w$ and $\ell(v)=i$ is even, we have $\ell(v)\le\ell(w)$ (by Property~2a). Since $v$ is the left endpoint of $vw$, $vw$ therefore has non-positive slope.  Therefore $\varphi(vw)$ and $\varphi(xy)$ differ in their second component.
  \end{enumerate}
  Therefore, for any pair of edges $vw,xy\in E(G)$ that cross, $\varphi(vw)\neq\varphi(xy)$, so the partition $P$ is a partition of $V(G)$ into $4k$ stacks with respect to $\prec_\sigma$, as required.
\end{proof}

\section{Proof of \thmref{main}}

Let $G$ be an edge-maximal $n$-vertex graph with $\tr(G)=3$.
Here, $G$ is \emph{edge-maximal} if adding any edge $e$ increases the track number to four or more. It is
helpful to recall that $G$ is a planar graph that has a straight-line
crossing-free drawing with the vertices of the track $T_1$ placed on the positive x-axis,
the vertices of the track $T_2$ placed on the positive y-axis and the vertices of
the track $T_3$ placed on the ray $\{(a,a):a<0\}$. See \figref{planar-view}.

\begin{figure}
  \begin{center}
     \includegraphics{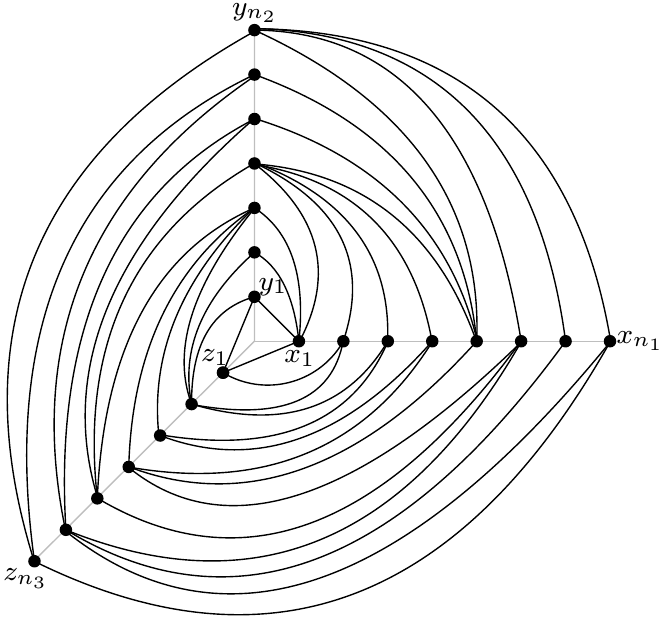}
  \end{center}
  \caption{The standard planar embedding of a 3-track graph.}
  \figlabel{planar-view}
\end{figure}

It will be easier to prove \thmref{main} for a weaker notion
of layering.  An \emph{$s$-weak layering} of $G$ is a mapping
$\ell:V(G)\to\Z$ with the property that, for every $vw\in E(G)$,
$|\ell(v)-\ell(w)|\le s$.  The sets $L_i=\{v\in V(G): \ell(v)=i\}$
are called \emph{layers}.  The terms \emph{$s$-weak layered path
decomposition} and \emph{$s$-weak layered pathwidth} of $G$, denoted
$\lpw_s(G)$, are defined the same way as layered path decompositions
and layered pathwidth, but with respect to $s$-weak layerings of $G$. The following result is easy (and well-known in the context of layered treewidth):

\begin{lem}\lemlabel{weak}
  For any $s\in\N$, $\lpw(G) \le s\cdot\lpw_s(G)$.
\end{lem}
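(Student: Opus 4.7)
The plan is to keep the path decomposition exactly as it is and merely coarsen the layering, replacing $\ell$ by the map that collapses every $s$ consecutive integers into a single layer. Concretely, given an $s$-weak layered path decomposition of $G$ with bags $B_1,\ldots,B_p$ and layering $\ell:V(G)\to\Z$ of $s$-weak layered width $k=\lpw_s(G)$, I would define the new layering
\[
  \ell'(v) \;=\; \lfloor \ell(v)/s \rfloor .
\]
The claim is that $(B_1,\ldots,B_p,\ell')$ is a (genuine) layered path decomposition of width at most $sk$.

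The first thing to check is that $\ell'$ is a valid layering, i.e., that $|\ell'(v)-\ell'(w)|\le 1$ for every edge $vw\in E(G)$. This is the only place where something could go wrong: if $\ell'(v)$ and $\ell'(w)$ differed by at least $2$, then $\ell(v)$ and $\ell(w)$ would lie in non-adjacent length-$s$ blocks, forcing $|\ell(v)-\ell(w)|\ge s+1$, which contradicts the assumption that $\ell$ is $s$-weak. So this step is routine once one writes out the two cases based on the signs of $\ell(v)$ and $\ell(w)$.

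The remaining step is to bound the layered width. By construction, the new layer $L'_j=\{v:\ell'(v)=j\}$ is exactly the union $L_{js}\cup L_{js+1}\cup\cdots\cup L_{(j+1)s-1}$ of $s$ consecutive original layers. Hence
\[
  |L'_j\cap B_i| \;=\; \sum_{t=js}^{(j+1)s-1} |L_t\cap B_i| \;\le\; s\cdot k ,
\]
using the $s$-weak layered width bound on each term. Maximising over $i,j$ gives $\lpw(G)\le sk=s\cdot\lpw_s(G)$, which is the desired inequality.

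I do not really expect any obstacle here; the only subtle point is the floor-division argument in the first step, and it should take no more than a couple of lines to justify.
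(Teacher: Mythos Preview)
Your proposal is correct and follows essentially the same argument as the paper: keep the bags, replace $\ell$ by $\ell'(v)=\lfloor \ell(v)/s\rfloor$, verify that $\ell'$ is a genuine layering, and bound each new layer--bag intersection by summing over the $s$ old layers it absorbs. The only cosmetic difference is that you phrase the layering check contrapositively, which is if anything slightly cleaner than the paper's one-line version.
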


\begin{proof}
    By definition there exists an $s$-weak layering $\ell:V(G)\to N$ that defines layers $L_i=\{v\in V(G):\ell(v)=i\}$, $i\in\Z$, and a path decomposition $B_1,\ldots,B_p$ of $G$ such that $|L_i\cap B_j|\le \lpw_s(G)$ for each $i\in\Z$ and $j\in\{1,\ldots,p\}$.

    Let $\ell'(v)=\lfloor \ell(v)/s\rfloor$ for each $v\in V(G)$.  For any edge $vw\in E(G)$, $\ell(v)-\ell(w)\le s$, so $\ell'(v)-\ell'(w)\le 1$, so $\ell'$ is a layering of $G$.  For each $i\in\Z$, the layer $L_i'=\{v\in V(G):\ell'(v)=i\}=\bigcup_{t=0}^{s-1} L_{is+t}$.  Therefore, for each $i\in \Z$ and $j\in\{1,\ldots,p\}$, $L'_i\cap B_j\le\sum_{t=0}^{s-1} |L_{is+t}\cap B_j| \le s\cdot\lpw_s(G)$.  Thus, $\ell'$ and $B_1,\ldots,B_p$ are a layered path decomposition of $G$ of layered width at most $s\cdot\lpw_s(G)$, so $\lpw(G)\le s\cdot\lpw_s(G)$.
\end{proof}

Let $T_1,T_2,T_3$ be a 3-track layout of $G$ with
$T_1=\{x_1,\ldots,x_{n_1}\}$, $T_2=\{y_1,\ldots,y_{n_2}\}$, and
$T_3=\{z_1,\ldots,z_{n_3}\}$ and the total orders $\prec_1,\prec_2,\prec_3$
are implicit so that, for example $z_i\prec_3 z_j$ if and only if $1\le i<j\le n_3$.
In terms of \figref{planar-view}, this means that $x_1,y_1,z_1$ form
the triangular face containing the origin and $x_{n_1},y_{n_2},z_{n_3}$
form the cycle on the boundary of the outer face.
From this point onward, all track indices are implicitly taken ``modulo 3''
so that for any integer $j$, $T_j$ refers to the track $T_{j'}$ with
index $j'=((j-1)\bmod 3)+1$.

The following observation follows from the fact that $G$ is edge-maximal.
\begin{obs}\obslabel{silly}
  For any two vertices of $G$ on distinct tracks, say $x_i$ and $y_j$, at least
  one of the following conditions is satisfied (see \figref{3-cases}):
  \begin{enumerate}
    \item $x_{i}y_{j}\in E(G)$; or
    \item there exists $x_{i'}y_{j'} \in E(G)$ with $i'>i$ and $j'<j$; or
    \item there exists $x_{i}y_{j'}\in E(G)$ with $j'>j$.
  \end{enumerate}
\end{obs}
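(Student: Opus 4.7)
The plan is to argue by contradiction using edge-maximality: assume that all three conditions fail, i.e., $x_iy_j\notin E(G)$, there is no $x_{i'}y_{j'}\in E(G)$ with $i'>i$ and $j'<j$, and there is no $x_iy_{j'}\in E(G)$ with $j'>j$. The goal is to exhibit a $3$-track layout of $G+x_iy_j$, which would contradict the edge-maximality of~$G$.

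First I would consider simply adding the edge $x_iy_j$ to the existing $3$-track layout. Since $x_iy_j$ is a $T_1$-$T_2$ edge, any X-crossing it creates must involve another $T_1$-$T_2$ edge $x_ay_b$ with $(a-i)(b-j)<0$. The failure of condition~(2) eliminates every such $x_ay_b$ with $a>i$ and $b<j$, leaving only potential \emph{blockers} $x_ay_b$ with $a<i$ and $b>j$. If no such blocker exists, then $x_iy_j$ can be inserted into the current layout with no X-crossings at all, immediately contradicting edge-maximality.

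Suppose then that at least one blocker $x_ay_b$ with $a<i$ and $b>j$ exists. The main lever is the monotonicity of $T_1$-$T_2$ edges in an X-crossing-free layout: for any two such edges $x_{a_1}y_{b_1}$ and $x_{a_2}y_{b_2}$ with $a_1<a_2$ we must have $b_1\le b_2$. Applied to the blocker $x_ay_b$ (with $a<i$) and any hypothetical $T_2$-neighbor $y_d$ of $x_i$, this forces $d\ge b>j$, which is exactly condition~(3) --- contradicting our assumption, provided that $x_i$ has at least one $T_2$-neighbor.

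The technical heart of the argument, and the step I expect to be the main obstacle, is the remaining sub-case in which $x_i$ has no $T_2$-neighbor at all. Letting $a^*$ be the largest $T_1$-index of any blocker, the same monotonicity forces every $x_c$ with $a^*<c\le i$ to also have no $T_2$-neighbor, creating a large empty wedge in the first quadrant of the standard planar drawing (with tracks on three rays). My plan is to use this emptiness, together with the planar embedding, to reposition $x_i$ within $T_1$ --- for instance, inserting it immediately before $x_{a^*}$ --- and then verify, via a careful case analysis of the $T_1$-$T_3$ edges incident to the shifted vertex, that the modified layout extended by $x_iy_j$ is still X-crossing-free. This yields a $3$-track layout of $G+x_iy_j$ and completes the contradiction.
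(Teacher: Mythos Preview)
Your overall strategy---assume all three conditions fail and derive a $3$-track layout of $G+x_iy_j$, contradicting edge-maximality---is exactly the intended one; the paper records the observation as an immediate consequence of edge-maximality with no further detail. Your analysis of the first cases is also correct: the failure of~(2) kills all crossings with $a>i,\ b<j$, and when a blocker $x_ay_b$ with $a<i,\ b>j$ exists, any $T_2$-neighbour $y_d$ of $x_i$ is forced by non-crossing to satisfy $d\ge b>j$, yielding~(3).

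The gap is in the remaining sub-case, and it is not where you think it is. The case ``$x_i$ has no $T_2$-neighbour'' simply cannot occur, and this follows from edge-maximality in one line---no repositioning is needed. Indeed, edge-maximality of $G$ implies that the bipartite graph $G[T_1,T_2]$ is \emph{maximally} non-crossing in the given layout (otherwise some $T_1$--$T_2$ edge could be added without creating any X-crossing, and the same $3$-track layout would serve for $G$ plus that edge). In a maximal non-crossing bipartite graph every vertex has at least one neighbour on the other side: if $x_i$ had none, let $\beta^-=\max\{b:\exists\,a<i,\ x_ay_b\in E\}$ (or $1$) and $\beta^+=\min\{b:\exists\,a>i,\ x_ay_b\in E\}$ (or $n_2$); non-crossing gives $\beta^-\le\beta^+$, and $x_iy_{\beta^-}$ can be inserted without crossing anything. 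So $x_i$ always has a $T_2$-neighbour, and your earlier monotonicity step finishes the proof.

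By contrast, the repositioning plan you sketch is both unnecessary and unsafe as stated. Sliding $x_i$ to just before $x_{a^*}$ can create new X-crossings among $T_1$--$T_3$ edges: if some $x_c$ with $a^*\le c<i$ has a $T_3$-neighbour $z_k$ and $x_i$ has a $T_3$-neighbour $z_l$ with $l>k$ (perfectly consistent with the original order), then after the move $x_i\prec x_c$ while $z_k\prec z_l$, an X-crossing. So the ``careful case analysis'' you defer would in fact fail without further modifications. Replace that whole sub-case with the one-line maximality argument above.
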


\begin{figure}
   \begin{center}
     \begin{tabular}{ccc}
       \includegraphics{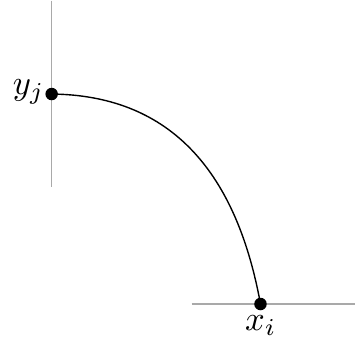} &
       \includegraphics{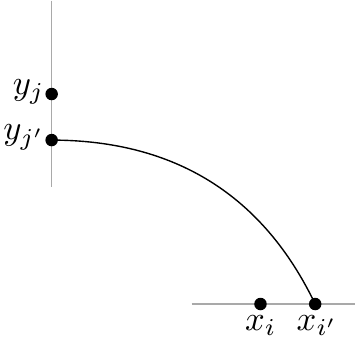} &
       \includegraphics{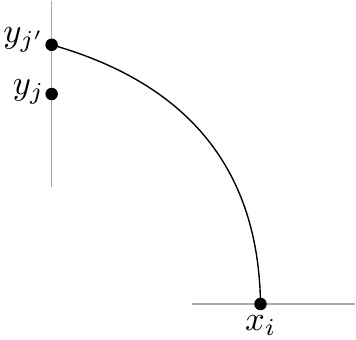} \\
       (1) & (2) & (3)
     \end{tabular}
   \end{center}
   \caption{The three cases in \obsref{silly}.}
   \figlabel{3-cases}
\end{figure}

\thmref{main} is a consequence of the following lemma.

\begin{restatable}{lem}{main}\lemlabel{main}
  The edge-maximal 3-track graph $G$ with tracks $x_1,\ldots,x_{n_1}$, $y_1,\ldots,y_{n_2}$, and $z_1,\ldots,x_{n_3}$ described above has a 2-weak layered path
  decomposition, $B_1,\ldots,B_p$, with a layering $\ell$ of (layered) pathwidth $2$ in which
  \begin{enumerate}
    \item for each $i\in\{1,2,3\}$ and each $v\in T_i$,
      $\ell(v)\equiv i\pmod 3$;
    \item $B_1=\{x_1,y_1,z_1\}$;
    \item $\ell(x_1)=1$, $\ell(y_1)=2$, and $\ell(z_1)=3$;
    \item $B_p=\{x_{n_1},y_{n_2},z_{n_3}\}$; and
    \item $x_{n_1},y_{n_2},z_{n_3}$ appear in 3 distinct consecutive layers.
  \end{enumerate}
\end{restatable}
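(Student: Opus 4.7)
The plan is to prove \lemref{main} by induction on $n = n_1 + n_2 + n_3$. The base case $n = 3$ is immediate: $G$ is the triangle $\{x_1, y_1, z_1\}$, so take $p = 1$, $B_1 = \{x_1, y_1, z_1\}$, and layers $\ell(x_1) = 1$, $\ell(y_1) = 2$, $\ell(z_1) = 3$. All five properties hold by inspection, and the (vacuous) 2-weak condition is trivial.

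For the inductive step, I would identify a ``peelable'' vertex $v \in \{x_{n_1}, y_{n_2}, z_{n_3}\}$ such that $G - v$ is again an edge-maximal 3-track graph with the same innermost triangle and with $v$'s predecessor on $v$'s track playing the new outer-triangle role. Since each pair of tracks must carry a maximal non-crossing bipartite edge set, $v$'s neighbors in $G$ are confined to a contiguous ``staircase'' window on each of the other two tracks, and \obsref{silly} then pins down which of the three outer-triangle candidates is peelable in the current configuration. Applying the inductive hypothesis to $G - v$ yields a 2-weak layered path decomposition $B'_1, \ldots, B'_{p'}$ whose final bag $B'_{p'}$ is the smaller outer triangle in three consecutive layers. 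I then extend this decomposition by appending one or more new bags that cover the edges incident to $v$, replacing $v$'s predecessor by $v$ in the final bag and choosing $\ell(v)$ to be the unique integer congruent to $v$'s track index modulo $3$ that keeps the new outer triangle in three consecutive layers. The 2-weak condition on edges from $v$ to its frontier neighbors follows because those neighbors lie in layers within $2$ of $\ell(v)$ by construction.

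The main obstacle is twofold. First, establishing that such a peelable $v$ always exists requires a careful case analysis based on the three alternatives of \obsref{silly} applied to the outer-triangle vertices, together with the maximal-staircase structure of the bipartite edge sets between track pairs; I expect the argument splits on which of the three outermost edges $x_{n_1}y_{n_2}$, $y_{n_2}z_{n_3}$, $z_{n_3}x_{n_1}$ is the ``last'' one in its respective staircase. Second, if $v$ has interior neighbors beyond the other two outer vertices, the single-bag extension is insufficient and one must insert a short sequence of intermediate bags as the frontier advances past each such neighbor. This is exactly where the layered-width-2 slack (as opposed to width $1$) is essential: during each transition, the bag briefly contains two vertices on $v$'s track in the same layer before the older one is dropped. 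To make this rigorous, I expect to strengthen the inductive statement so that it records not only that the final bag sits in three consecutive layers, but also the precise layer of each of its three vertices, so that the new $\ell(v)$ can be chosen compatibly with both the 2-weak constraint and property~(5).
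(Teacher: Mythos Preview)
Your approach is genuinely different from the paper's, and as written it has a real gap at exactly the point you flag as ``the main obstacle.''

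\textbf{Where your peeling induction breaks.} You assume you can always find an outer vertex $v\in\{x_{n_1},y_{n_2},z_{n_3}\}$ so that $G-v$ is again edge-maximal $3$-track. This is false. Take $n_1=n_2=n_3=2$ with the three staircases chosen so that $x_1y_2$, $y_1z_2$, and $z_1x_2$ are all present (the ``AAA'' configuration). Then each of $x_2,y_2,z_2$ has degree~$3$, and removing any one of them leaves a graph that is missing an outer-triangle edge and hence is not edge-maximal. You can of course complete $G-v$ to an edge-maximal $G'$ and induct on $G'$, but then the second obstacle bites: $v$ has interior neighbours (e.g.\ $x_2$ is adjacent to $z_1$), and those neighbours may have disappeared from the bags long before $B'_{p'}$. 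You cannot ``insert a short sequence of intermediate bags'' at the end to pick them up again without violating the connectivity axiom of a path decomposition, and you cannot add $v$ to earlier bags without controlling how many vertices already sit in $v$'s layer there. Your width-$2$ slack gives you room for one extra vertex per layer per bag, but the inductive hypothesis as you state it says nothing about which earlier bags still have that slack available. Making this work would require a substantially stronger invariant than Conditions~(1)--(5).

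\textbf{What the paper does instead.} The paper's induction is not on single vertices but on \emph{separating triangles}: if some $\{x_i,y_j,z_k\}$ is a cut set with one vertex per track, it splits $G$ there and glues the two inductively obtained decompositions at that triangle (Conditions~(2)--(5) exist precisely to make this gluing work). The substantive case is when no such separating triangle exists. There the paper abandons induction entirely and instead constructs a single greedy \emph{spiral path} $P=v_1,\ldots,v_r$ that cycles through the tracks, sets $\ell(v_i)=i$, and shows (via \obsref{silly} and crossing arguments) that this extends to a $2$-weak layering of all of $G$. The key structural fact is that $G-P$ is then a \emph{levelled planar} graph with respect to $\ell$, so by the cited result of Bannister \etal\ it has a width-$1$ layered path decomposition; adding the vertices of $P$ (one per layer) to every bag yields width~$2$. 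The spiral path is what globally organises the interior neighbours that your local peeling cannot reach.
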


Before proving \lemref{main}, we show how it implies \thmref{main}.
Since layered pathwidth is monotone with respect to the addition of edges,
it is safe to assume (as \lemref{main} does) that $G$ is edge-maximal.
By \lemref{main}, therefore $G$ has $\lpw_2(G)\le2$ and therefore, by
\lemref{weak}, $\lpw(G)\le 4$.

\begin{proof}[Proof of \lemref{main}]
  If $|V(G)\le 4$, then the result is trivial so assume from this point onward that $|V(G)\ge 5$.  Next, suppose $V(G)$ contains
  a cut set $C=\{x_i,y_j,z_k\}$ having exactly one vertex in each track.
  Since $G$ is edge-maximal, $x_i,y_j,z_k$ form a cycle in $G$.  Now,
  the subgraph $G_1$ of $G$ induced by $\{x_1,\ldots,x_i, y_1,\ldots,y_j,
  z_1,\ldots,z_k\}$ is an edge-maximal graph with $\tr(G_1)=3$ and we
  can inductively apply \lemref{main} to find a width-2 2-weak layered
  path decomposition of $G_1$ in which $x_i,y_j,z_k$ are in the last bag
  and are assigned to three consecutive distinct layers $r+1$, $r+2$, and $r+3$.
  Note that there are three possible assignments of $x_i,y_j,z_k$ to
  these three layers depending on the value of $r\bmod 3$.  Suppose,
  without loss of generality, that $\ell(y_j)=r+1$ (so $\ell(z_k)=r+2$
  and $\ell(x_i)=r+3$.)

  Next, consider the graph $G_2$ induced by
  $\{x_i,\ldots,x_{n_1},y_j,\ldots,y_{n_2},z_k,\ldots,z_{n_3}\}$.
  We apply \lemref{main} inductively on $G_2$ relabelling tracks to
  ensure that in the resulting layered decomposition $\ell(y_j)=1$,
  $\ell(z_k)=2$ and $\ell(x_i)=3$.   We can now obtain a width-2 2-weak
  layered path decomposition of $G$ by joining the two decompositions.
  In particular, concatenating the sequence of bags for $G_1$ with
  the sequence of bags for $G_2$ gives a path decomposition of $G$
  and adding $r$ to the indices of all layers in the layering of $G_2$
  gives a 2-weak layering of $G$.

  Thus, all that remains is to study the case where $G$ contains no cut
  set having exactly one vertex on each track.  We claim that, in this
  case, $G$ contains the edge $x_1z_2$ or it contains the edge $z_1x_2$.
  Since $G$ is edge-maximal, this is obvious unless $n_1=n_3=1$ so
  that neither $z_2$ nor $x_2$ exist.  However, since $|V(G)|\ge 5$, $n_2\ge 3$, so  this would imply that $x_1,z_1,y_2$ is a cut set with one vertex on
  each track, since its removal separates all $y_1$ from $y_3$.

  We will construct a path $P=v_1,\ldots,v_r$, an example of which is illustrated in \figref{graph-path}(a). The first vertex of $P$ will be one of
  $x_1,y_1,z_1$ and the final three vertices are $x_{n_1}$,
  $y_{n_2}$, and $z_{n_3}$, though not necessarily in that order.
  The path $P$ will \emph{spiral} in the sense that $v_i\in T_i$
  for all $i\in\{1,\ldots,r\}$.  Observe that this spiralling implies that the subsequence of vertices of $P$ on any track $T_i$ is increasing (getting further from the origin).


  \begin{figure}
  \begin{center}
  \begin{tabular}{ccc}
  \includegraphics[width=.31\textwidth]{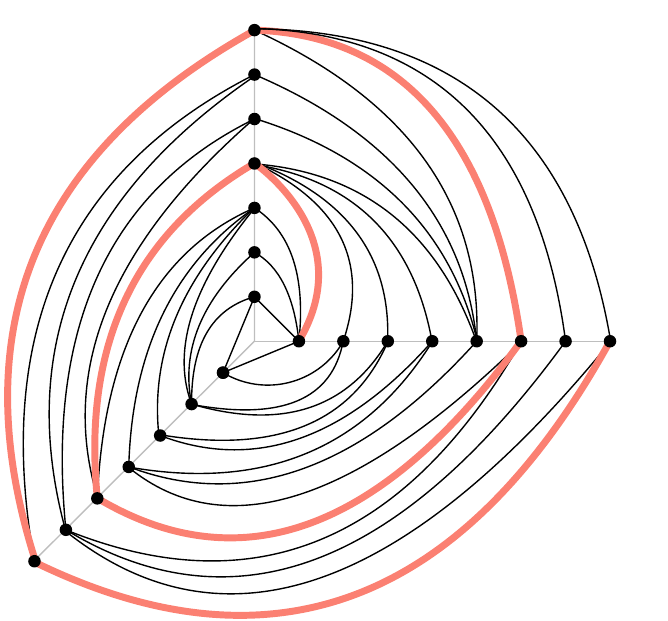} &
  \includegraphics[width=.31\textwidth]{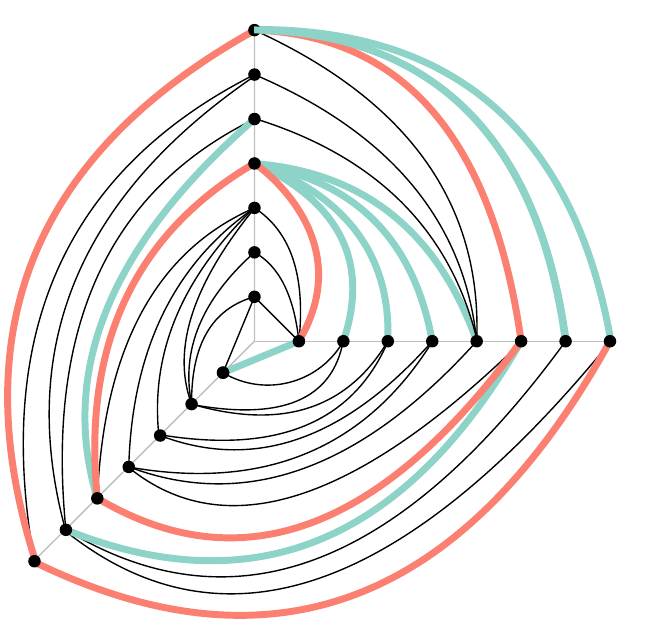} &
  \includegraphics[width=.31\textwidth]{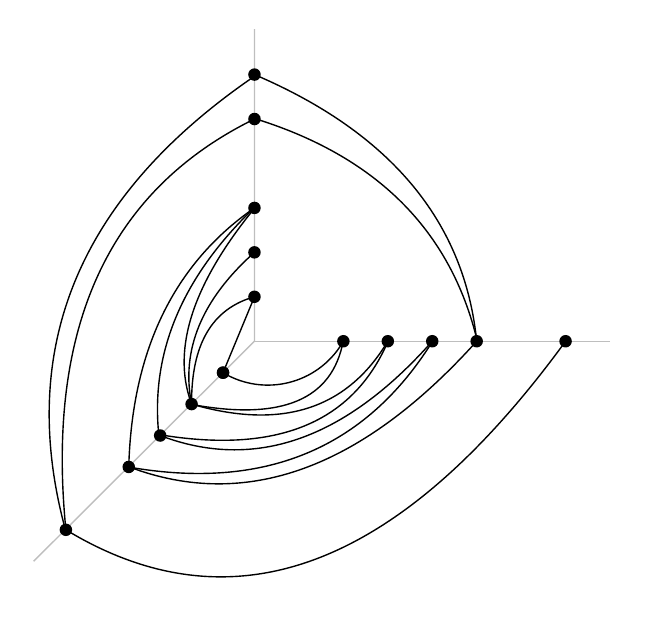} \\
  (a) & (b) & (c)
  \end{tabular}
  \end{center}
  \caption{A three track graph $G$ with (a)~the spiral curve, (b)~edges spanning two layers highlighted, (c) the levelled planar graph $G-P$.}
  \figlabel{graph-path}
  \end{figure}

  $P$ is constructed greedily: if $P$ has reached vertex $v_k$, it continues to the neighbouring vertex
  $v_{k+1}$ of $v_{k}$ with the highest index on track $T_{k+1}$ that is not yet in $P$.
  We will call this vertex $v_{k+1}$ the \emph{furthest neighbouring vertex} of $v_k$.  To see why this is always possible,
  recall that $P$ already contains an edge $v_{k-3},v_{k-2}$. Now, without loss of generality we can
  apply \obsref{silly} with $x_i=v_k$ and $y_j=v_{k-2}$, so there
  are three cases (see \figref{sloppy}):

\begin{figure}
   \begin{center}
     \begin{tabular}{ccc}
       \includegraphics[width=0.3\columnwidth]{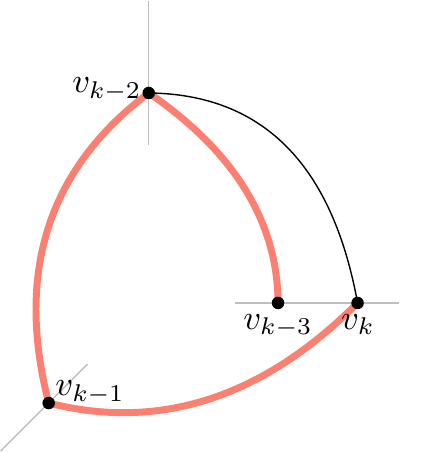} &
       \includegraphics[width=0.3\columnwidth]{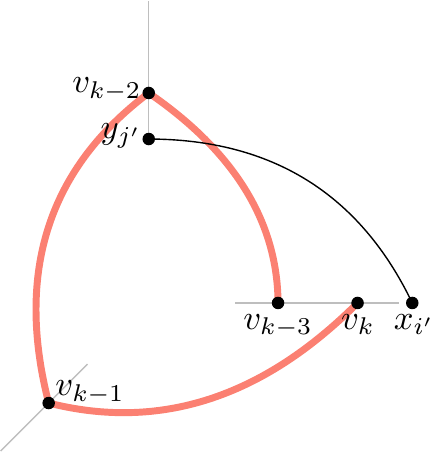} &
       \includegraphics[width=0.3\columnwidth]{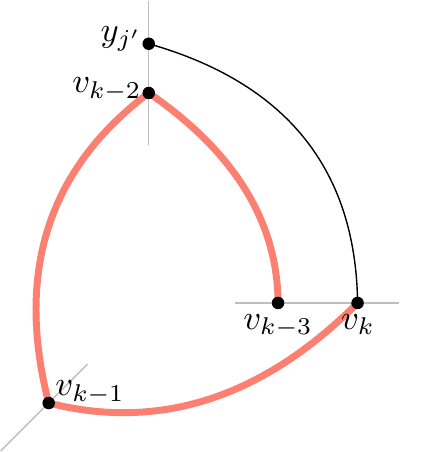} \\
       (1) & (2) & (3)
     \end{tabular}
   \end{center}
   \caption{The path $P$ can always be extended.}
   \figlabel{sloppy}
\end{figure}

  \begin{enumerate}
  \item $v_k v_{k-2}\in E(G)$.  In this
  case $v_{k-2}$, $v_{k-1}$, and $v_k$ form a cycle in $G$.  Then either
  $\{v_{k-2},v_{k-1},v_{k}\}=\{x_{n_1},y_{n_2},z_{n_3}\}$ or
  $\{v_{k-2},v_{k-1},v_{k}\}$ is a cut set with exactly one vertex in
  each track.  In the former case, the path $P$ is complete. The latter case is  excluded by the assumption that $G$ contains no such cut sets.

  \item there exists an edge $x_{i'}y_{j'}\in E(G)$ with $i' > i$
    (i.e. $i' > k$) and $j'< j$ (i.e. $j'< k-2$).  This case is not possible, since this edge would cross $v_{k-3}v_{k-2}$.

  \item there exists an edge $v_k y_{j'}\in E(G)$ with $j' >j$  (i.e. $j'> k-2$).  In this case, $P$ is extended by adding $v_{k+1}=y_{j'}$.
  \end{enumerate}

  Thus we have constructed the furthest vertex path $P=v_1,\ldots,v_r$ whose first vertex is one of
  $x_1,y_1,z_1$ and whose last three vertices are $x_{n_1}$,
  $y_{n_2}$ and $z_{n_3}$ (not necessarily in order).  We assign layers
  to the vertices of $P$ as follows: For each vertex $v_i$ on $P$,
  we set $\ell(v_i)=i$.  Note that this satisfies Conditions~3 and 5
  of the lemma and also satisfies Condition~1 for the vertices of $P$.
  For each $t\in\{1,2,3\}$, any vertex $v\in T_t$ that is not in
  $P$ is assigned to the same layer as $v$'s immediate successor in $P\cap T_t$.
  This assignment satisfies Condition~1 for vertices not in $P$.
  Finally, we will prove that this gives a 2-weak layering of $G$. In other words,
  in the worst case, a vertex $v$ with $\ell(v)=i$ can only share an edge with vertex $u$ where  $i-2 \le \ell(u) \le i+2$.  See \figref{graph-path}(b).

 Any edge between $v$ and $w$ where neither $v$ nor $w$ is in $P$ will only span one layer.
 Any edge between any two vertices $v_i$ and $v_j$ where $v_i, v_j \in P$, will span only one layer if $j = i\pm1$.
 This would mean that $v_{i}v_{j}$ is an edge in the graph $G$
 and that this edge was used to construct our furthest vertex path $P$.
 If $j \neq i\pm1$, then there are two cases:
   \begin{enumerate}
	\item $j = i \pm 2$ Such an edge is possible, and allowed since it spans only two layers. 
	\item $j = i \pm 4$ Such an edge cannot exist since it would contradict our greedy path constructing algorithm.
	If the edge $v_iv_{i+4}$ (or the edge $v_{i-4}v_i$) existed then the edge $v_i v_{i+1}$ ($v_{i-4}v_{i-3}$ ) would not have been added to $P$.
   \end{enumerate}

 Any edge between $v$ and $w$ where $v \in P$ and $w \not\in P$ will be one of 7 types (see \figref{2layermaximum}).
 Without loss of generality, assume the spiral is travelling from $T_1$ to $T_2$ to $T_3$. Let $x_i$ be a vertex on the constructed path $P$.
 First, we look at the possible cases for an edge between $x_i$ with $\ell(x_i) = m$ and $y_j$ where $y_j \notin P$.

 \renewcommand{\thesubfigure}{\arabic{subfigure}}
  \begin{figure}
    \begin{center}
    \begin{subfigure}[t]{0.3\hsize}\includegraphics[width=40mm]{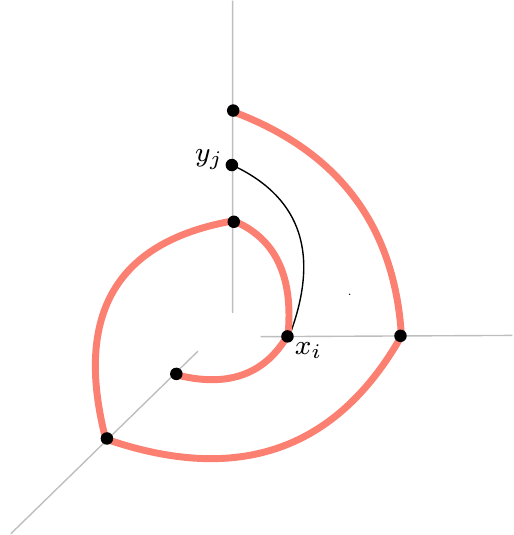} \caption{}\end{subfigure}
    \begin{subfigure}[t]{0.3\hsize}\includegraphics[width=40mm]{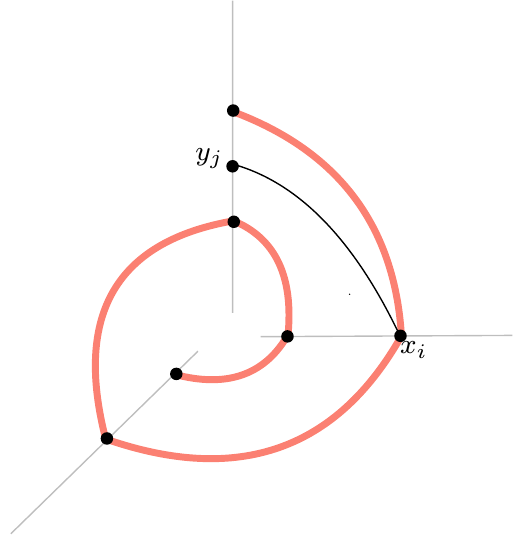} \caption{}\end{subfigure}
    \begin{subfigure}[t]{0.3\hsize}\includegraphics[width=40mm]{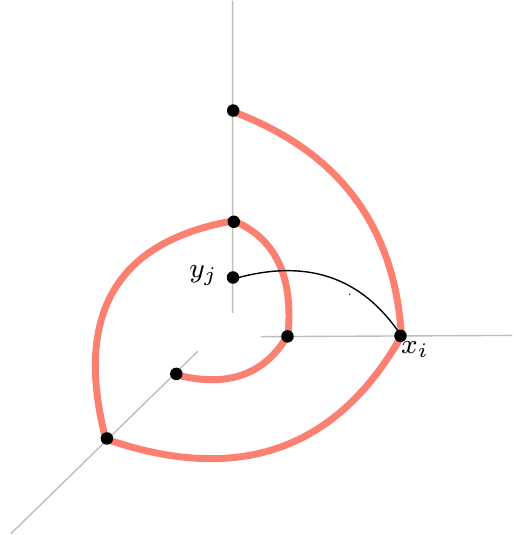} \caption{}\end{subfigure}
    \\[2ex]
    \begin{subfigure}[t]{0.4\hsize}\includegraphics[width=40mm]{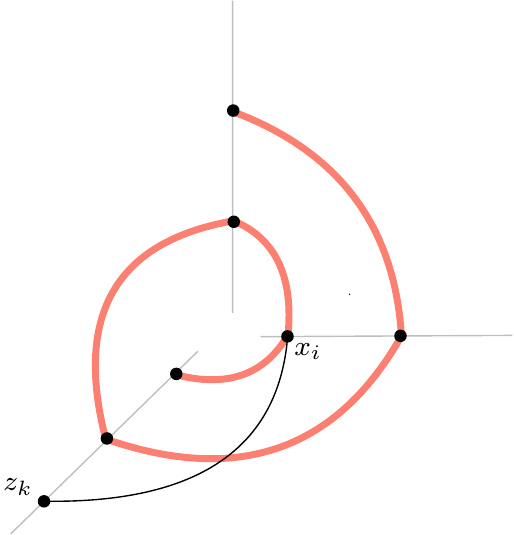} \caption{}\end{subfigure}
    \begin{subfigure}[t]{0.4\hsize}\includegraphics[width=40mm]{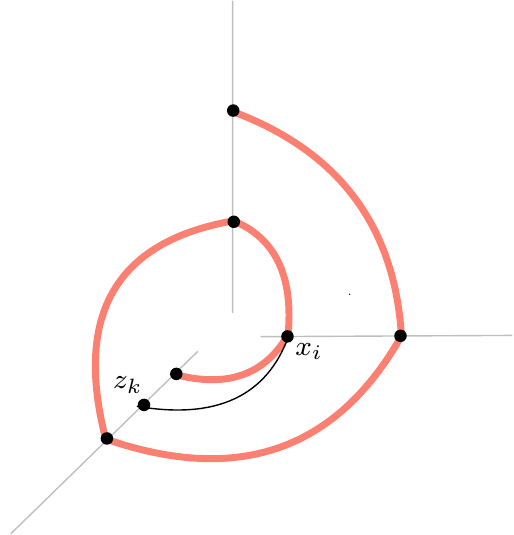} \caption{}\end{subfigure}
    \\[2ex]
    \begin{subfigure}[t]{0.4\hsize}\includegraphics[width=40mm]{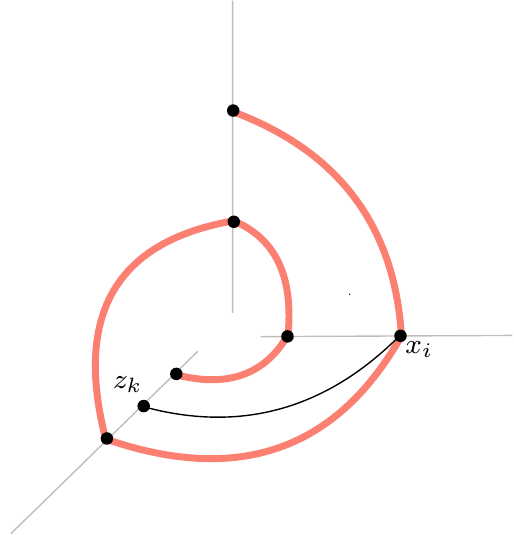} \caption{}\end{subfigure}
    \begin{subfigure}[t]{0.4\hsize}\includegraphics[width=40mm]{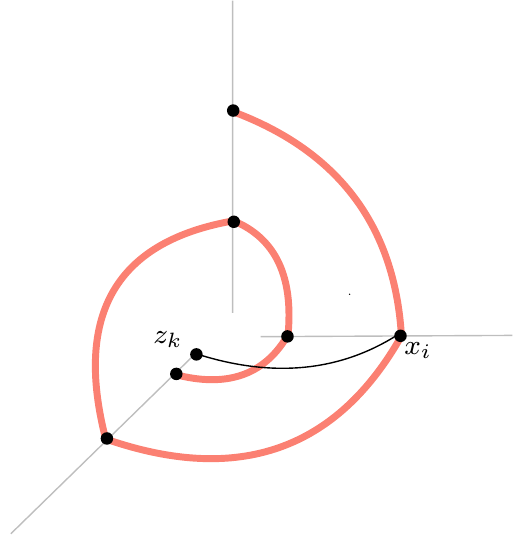} \caption{}\end{subfigure}
    \caption{The edge between a vertex $x_i$ and a vertex $y_j$ or $z_k$ cannot span more than 2 layers}
   \figlabel{2layermaximum}
 \end{center}
    \end{figure}

  \begin{enumerate}
  	\item$\ell(y_j) =  m+3n$ where $n \geq 1$. This edge cannot exist since it would contradict our greedy path constructing algorithm.
  	\item$\ell(y_j) =  m+1$. This edge will only span one layer.
  	\item $\ell(y_j) = m+1-3n$ where $n \geq 1$. This edge cannot exist, since it would create a crossing with the edge $v_{m-3}v_{m-2}$.
  \end{enumerate}

  Second, we look at the possible cases for an edge between $x_i$ with $\ell(x_i) = m$ and $z_k$ where $z_k \notin P$.
 \begin{enumerate}
  \setcounter{enumi}{3}
  	\item $\ell(z_k) = m+2+3n$ where $n \geq 1$. This edge cannot exist, since it would create a crossing with the edge $v_{m+2}v_{m+3}$.
  	\item$\ell(z_k) =  m+2$. This edge spans exactly two layers.
  	\item$\ell(z_k) =  m-1$. This edge will only span one layer.
  	\item $\ell(z_k) =  m-1-3n$ where $n \geq 1$. This edge cannot exist, since it would create a crossing with the edge $v_{m-4}v_{m-3}$.
  \end{enumerate}

  Next, we will need a notion of levelled planar graphs. The class of levelled
  planar graphs was introduced in 1992 by Heath and Rosenberg
  \cite{HR-SJC92} in their study of queue layouts of graphs. A \emph{levelled
  planar drawing} of a graph is a straight-line crossing-free drawing in
  the plane, such that the vertices are placed on a sequence of parallel
  lines (called levels), where each edge joins vertices in two
  consecutive levels. A graph is \emph{levelled planar} if it has a levelled
  planar drawing. (This is a well studied model for planar
  graph drawing, known as a Sugiyama-style drawing \cite{STT81,BM2001,HN2013,BETT99}.)

  Now, consider the graph $G-P$ obtained by removing the vertices of $P$
  from $G$ (see \figref{graph-path}(c)).  We claim that this graph is a levelled
  planar graph in which the levels
  of the vertices are given by the layering $\ell$ defined above. (Recall that the edges of $G$ spanning two layers all have at least one endpoint in $P$.)
  Refer to \figref{prism}. One way to see this is to imagine $G$
  as being drawn with its vertices on the three vertical edges of the surface of
  a triangular prism so that $x_1,y_1,z_1$ are the vertices of one
  triangular face and $x_{n_1},y_{n_2},z_{n_3}$ are the vertices of
  the other triangular face.  Now, if we remove the triangular faces
  of this prism, cut it along the embedding of $P$, and unfold
  the resulting surface so that it lies in the plane, then we obtain a
  drawing of $G-P$ in which the vertices lie on a set of parallel lines
  and in which the edges only join vertices on two consecutive lines.
  This gives the desired levelled planar drawing of $G-P$.


  \begin{figure}
   \begin{center}
      \includegraphics[width=0.95\columnwidth]{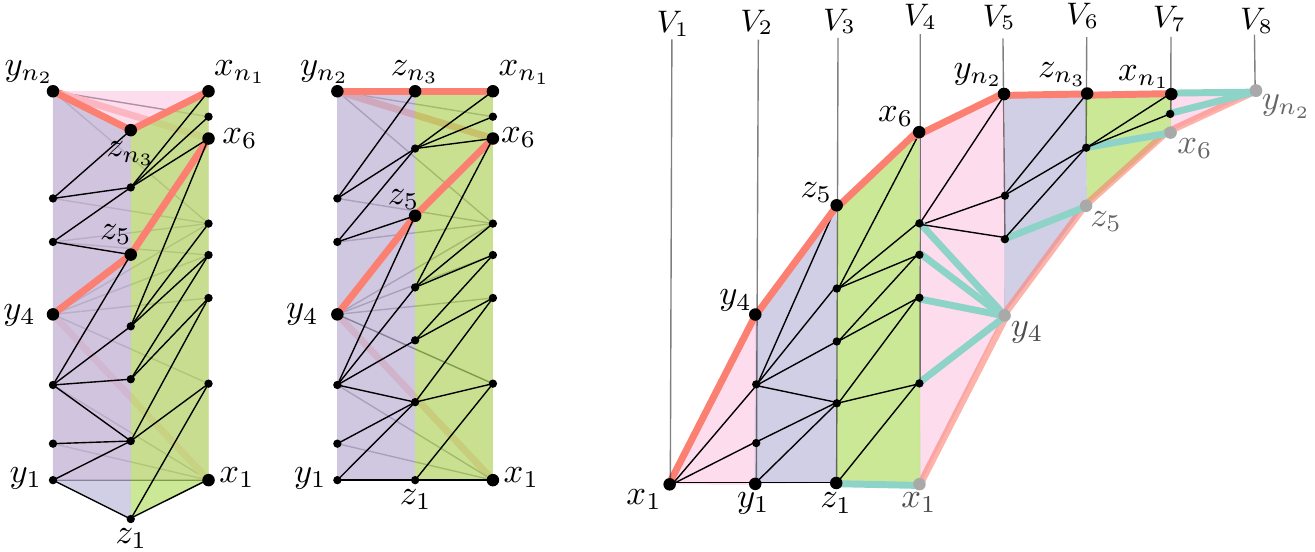}
   \end{center}
   \caption{Cutting a prism along $P$ to obtain a levelled planar
    drawing of $G-P$.}
   \figlabel{prism}
  \end{figure}


  By a result of Bannister \etal\ \cite[Proof of
  Theorem~5]{bannister2018track}, $G-P$ has a layered
  path decomposition $B_1,\ldots,B_p$ of width 1 using the layering
  $\ell$ defined above.  If we add the vertices of $P$ to every bag
  of this decomposition we obtain a width-2 2-weak layered path
  decomposition of $G$.  Finally, to satisfy Conditions 2 and 4 of
  the lemma, we prepend a bag $B_0=\{x_1,y_1,z_1\}$ and append a bag
  $B_{p+1}=\{x_{n_1},y_{n_2},z_{n_3}\}$.
\end{proof}

\section{Conclusions}

We have presented two results on layered pathwidth that help complete our understanding of the relationship between layered pathwidth, stack number, and track number.  Graphs of bounded layered pathwidth include grids and certain types of intersection graphs.  \thmref{stacknumber}, for example, implies that unit-disk graphs with maximum clique size $k$ have stack number $O(k)$ since they have been shown to have $O(k)$ layered pathwidth \cite{DBLP:conf/gd/BannisterDDEW16,bannister2018track}.

We conclude this discussion by remarking that upper bounds on the stack number of graphs of bounded layered \emph{treewidth} are not yet known, and present a challenging avenue for further study.  For example, $k$-planar graphs are known to have layered treewidth $O(k)$ \cite[Theorem~3.1]{DBLP:journals/siamdm/DujmovicEW17}.  Therefore, bounding stack number by a function of layered treewidth would imply that $k$-planar graphs have bounded stack-number. It is still unknown whether $k$-planar graphs have bounded stack-number except in the case $k=1$ \cite{DBLP:journals/algorithmica/BekosBKR17,DBLP:journals/corr/AlamBK15}.

\bibliographystyle{plain}
\bibliography{lpw}

\end{document}